\documentclass[10pt,oneside,twocolum,final]{IEEEtran}

\usepackage{algorithm}
\usepackage{algpseudocode}
\usepackage{amsfonts}
\usepackage{amsmath}
\usepackage{amssymb}
\usepackage{bm}
\usepackage{breqn}
\usepackage{caption}
\usepackage{cases}
\usepackage{cite}
\usepackage{color}
\usepackage{enumerate}
\usepackage{epsfig}
\usepackage{epstopdf}
\usepackage{exscale}
\usepackage{float}
\usepackage{graphics}
\usepackage{graphicx}
\usepackage{latexsym}
\usepackage{multicol}
\usepackage{multirow}
\usepackage{relsize}
\usepackage{stfloats}
\usepackage{subfigure}

\begin{document}

\title{ Multi-Pair Two-Way Massive MIMO DF Relaying Over Rician Fading Channels Under Imperfect CSI}
\IEEEoverridecommandlockouts
\author{\vspace{-0.1cm}
        Zhangjie~Peng, 
        Shuxian~Wang,
        Cunhua~Pan,~\IEEEmembership{Member,~IEEE,}
        Xianzhe~Chen,\\
        Julian~Cheng,~\IEEEmembership{Senior Member,~IEEE}
        and~Lajos~Hanzo,~\IEEEmembership{Fellow,~IEEE}
        \vspace{-0.2cm}

\thanks{This work was supported in part by the NSFC under Grant 61701307, and the open research fund of National Mobile Communications Research Laboratory, Southeast University under Grant 2018D14. The work of Lajos Hanzo was supported by the European Research Council’s Advanced Fellow Grant QuantCom (Grant No. 789028). (\emph{Corresponding author: Cunhua Pan}.)}
\thanks{Z. Peng is with the College of Information, Mechanical, and Electrical Engineering, Shanghai Normal University, Shanghai 200234, China, also with the National Mobile Communications Research Laboratory, Southeast University, Nanjing 210096, China, and also with the Shanghai Engineering Research Center of Intelligent Education and Bigdata, Shanghai Normal University, Shanghai 200234, China (e-mail: pengzhangjie@shnu.edu.cn).}
\thanks{S. Wang and X. Chen are with the College of Information, Mechanical and Electrical Engineering,
Shanghai Normal University, Shanghai 200234, China (e-mail: 278306849@qq.com; 1000479050@smail.shnu.edu.cn).}
\thanks{C. Pan is with the School of Electronic Engineering and Computer Science at Queen
Mary University of London, London E1 4NS, U.K. $( \text{e-mail: c.pan@qmul.ac.uk} )$.}
\thanks{J. Cheng is with the School of Engineering, The University of British
Columbia, Kelowna, BC V1V 1V7, Canada (e-mail: julian.cheng@ubc.ca).}
\thanks{L. Hanzo is with the School of Electronics and Computer Science,
University of Southampton, Southampton SO17 1BJ, U.K. $( \text{e-mail: lh@ecs.soton.ac.uk})$.}
\vspace{-1cm}

}

\maketitle

\newtheorem{lemma}{Lemma}
\newtheorem{proof}{Proof}
\newtheorem{theorem}{Theorem}
\newtheorem{remark}{Remark}
\newtheorem{proposition}{Proposition}

\vspace{-0.8cm}
\begin{abstract}
We investigate a multi-pair two-way decode-and-forward relaying aided massive multiple-input multiple-output antenna system under Rician fading channels, in which multiple pairs of users exchange information through a relay station having multiple antennas. Imperfect channel state information is considered in the context of maximum-ratio processing. Closed-form expressions are derived for approximating the sum spectral efficiency (SE) of the system. Moreover, we obtain the power-scaling laws at the users and the relay station to satisfy a certain SE requirement in three typical scenarios. Finally, simulations validate the accuracy of the derived results.

\begin{IEEEkeywords}
Massive MIMO,
rician fading channels,
decode-and-forward,
two-way relaying,
power-scaling law.

\end{IEEEkeywords}

\end{abstract}

\vspace{-0.25cm}
\section{Introduction}
Driven by the dramatically increasing tele-traffic requirements, massive multiple-input multiple-output (MIMO) techniques have been extensively studied\cite{Marzetta2010Noncooperative},
where the users exchange their information via a base station (BS) equipped with hundreds of antennas.
Compared with traditional systems, massive MIMO systems substantially increase the spectral efficiency (SE) and energy efficiency (EE)
owing to their reduced transmit power\cite{jin2014ergodic}.
Hence, massive MIMO techniques play an important role in the current/next-generation networks.

The integration of massive MIMO applications with relay protocols can increase the network capacity and extend the coverage\cite{peng2013achievable}.
The power scaling laws of a one-way (OW) relay network were studied\cite{suraweera2013multi}.
However, OW relaying has the drawback of low SE
, which may potentially be doubled by two-way (TW) relaying protocols.
Explicitly, in TW relaying systems, multiple user pairs communicate with each other in a unique bidirectional channel\cite{feng2017power}.
Thus, mult-pair TW relaying systems have been developed to further improve the SE by adopting maximum-ratio (MR) processing at the relay.

Generally, two main relaying protocols are widely used: amplify-and-forward (AF) and decode-and-forward (DF).
DF relays decode the received signals before forwarding the re-encoded signals from a lower distance, which avoids interference and noise amplification\cite{gao2013sum}.
Additionally, DF TW relaying is capable of performing independent precoding and power allocation in each communication direction\cite{kong2018multipair}.
In practice, massive MIMO systems generally operate in line-of-sight (LOS) propagation conditions\cite{sayeed2011continuous}, and
Rician fading accurately models both LOS and diffuse scattered components \cite{zhang2014power,2018Multi,2019Power}.
Despite this, there are a paucity of analytical contributions under Rician fading channels for massive MIMO aided TW relaying system with imperfect CSI.

\vspace{-0.2cm}
\section{System Model}
We first study a multi-pair TW massive MIMO half-duplex DF relaying system that has $N$ pairs of users each employing a single antenna and an $M$-antenna relay (${T_R}$) under imperfect CSI.
The users at both ends are denoted by ${U_{A,i}}$ and ${U_{{B,i}}}$, for $i=1,...,N$.
Additionally, none of the communicating users has direct LOS links and can only exchange information through the TW relay.
The relay operates in time-division-duplex (TDD) mode.
We assume the reciprocity of the channels,
and denote the uplink (UL) and downlink (DL) channels between ${U_{{X,i}}}$ and ${T_R}$ by ${\mathbf{h}_{XR,i}}$ and $\mathbf{h}_{XR,i}^T$, respectively, where $X = A , B$ and $i = 1, \ldots ,N$.
Additionally, the channel matrix is formed as ${\mathbf{H}_{XR}} =\left[{{\mathbf{h}_{XR,1}},\;...,\;{\mathbf{h}_{XR,N}}} \right]$, $X = A , B$.
Then, the channel vector ${\mathbf{h}_{XR,i}}$ is expressed as $\mathbf{h}_{XR,i}^{} = \mathbf{g}_{XR,i}^{}\sqrt {\beta _{XR,i}^{}}$, where $\mathbf{g}_{XR,i}^{}$ represents the fast-fading element, while ${\beta _{XR,i}^{}}$ is the path-loss coefficient.
We assume that all the channels obey Rician distribution, and they are expressed as\cite{zhang2014power}
\setcounter{equation}{0}
\vspace{-0.25cm}
\begin{equation}\label{h_{XR,i}}\vspace{-0.25cm}
{\mathbf{g}_{XR,i}} \!\!=\!\! \sqrt {\!\!\frac{{{K_{XR,i}}}}{{{K_{XR,i}}\!\! +\!\! 1}}} {{\bar {\mathbf{g}}}_{XR,i}} \!\!+\!\! \sqrt {\!\!\frac{1}{{{K_{XR,i}} \!\!+\!\! 1}}} {{\tilde {\mathbf{g}}} _{XR,i}},   X\!\! \in\!\! \left\{\!{A,B}\! \right\},
\end{equation}
where ${{\bar {\mathbf{g}}}_{XR,i}}$ denotes the LOS part representing the deterministic component, ${{\tilde{\mathbf{g}}}_{XR,i}}$ denotes the scattered part representing the random component, and ${K_{XR,i}}$ is the Rician $K$-factor.
Perfect CSI is challenging to obtain for all the antennas.
We use the MMSE estimator at ${T_R}$ to estimate ${\mathbf{H}_{AR}}$ and ${\mathbf{H}_{BR}}$ \cite{zhang2014power},
where we have ${\mathbf{h}_{AR,i}} = {{\mathbf{\hat h}}_{AR,i}} + {\mathbf{e}_{AR,i}}$ and ${\mathbf{h}_{BR,i}} = {\mathbf{\hat h}_{BR,i}} + {\mathbf{e}_{BR,i}}$;
${\mathbf{\hat h}_{AR,i}}$ and ${\mathbf{\hat h}_{BR,i}}$ are the $i$th columns of the estimated matrices ${\mathbf{\hat H}_{AR}}$ and ${\mathbf{\hat H}_{BR}}$; ${\mathbf{e}_{AR,i}}$ and ${\mathbf{e}_{BR,i}}$ are the $i$th columns of the estimation error matrices ${\mathbf{E}_{AR}}$ and ${\mathbf{E}_{BR}}$, respectively.
${\mathbf{\hat H}_{XR}}$ and ${\mathbf{E}_{XR}}$ ($X\!=\!A$ or $B$) are independent.
Based on the assumption of the worst-case uncorrelated Gaussian noise, we can respectively obtain the variance of the estimation error vector elements ${\mathbf{e}_{AR,i}}$ and ${\mathbf{e}_{BR,i}}$ as $\sigma _{AR,i}^2 = \frac{{{\beta _{AR,i}}}}{{\left( {1 + \tau {p_p}{\beta _{AR,i}}} \right)\left( {{K_{AR,i}} + 1} \right)}}$ and $\sigma _{BR,i}^2 = \frac{{{\beta _{BR,i}}}}{{\left( {1 + \tau {p_p}{\beta _{BR,i}}} \right)\left( {{K_{BR,i}} + 1} \right)}}$, where $\tau $ denotes the channel training interval and $p_p$ is the transmit power of each pilot symbol.

The data transmission process is composed of two separate phases. First, all $N$ user pairs $\left( {{U_{A,i}},{U_{B,i}}} \right)$ simultaneously transmit their signals, $\left( {\sqrt {{p_{A,i}}} {x_{A,i}},\sqrt {{p_{B,i}}} {x_{B,i}}} \right)$, to ${T_R}$ in the UL phase. Thus, the UL signal received at ${T_R}$ is expressed as
\vspace{-0.25cm}
\begin{equation}\label{y_r}\vspace{-0.25cm}
{\mathbf{y}_r} = \sum\limits_{i = 1}^N {\left( {\sqrt {{p_{A,i}}} {\mathbf{h}_{AR,i}}{x_{A,i}} + \sqrt {{p_{B,i}}} {\mathbf{h}_{BR,i}}{x_{B,i}}} \right)}  + {\mathbf{n}_R},
\end{equation}
where ${{p_{A,i}}}$ and ${{p_{B,i}}}$ are the UL transmit powers of ${U_{A,i}}$ and ${U_{B,i}}$, respectively. The variables ${{x_{A,i}}}$ and ${{x_{B,i}}}$ respectively denote the signals transmitted by ${U_{A,i}}$ and ${U_{B,i}}$ with ${\mathbb{E}}\{ {{{| {{x_{A,i}}} |}^2}} \} = {\mathbb{E}}\{ {{{| {{x_{B,i}}} |}^2}} \} = 1$, where ${\mathbb{E}}\left\{  \cdot  \right\}$ represents the expectation operator. The vector ${\mathbf{n}_R} \sim {\cal C}{\cal N}\left( {{\bf{0}},{{\sigma _r^2\bf{I_N}}}} \right)$ is the additive white Gaussian noise (AWGN) at ${T_R}$.
The UL signal received at ${T_R}$ is decoded by multiplying it with the linear processing matrix ${{\bf{F}}_u}$, yielding
\vspace{-0.2cm}
\begin{equation}\label{Z_r}\vspace{-0.2cm}
{{{\bf{Z}}}_r}= {{\bf{F}}_u}{\mathbf{y}_r},
\end{equation}
where we have ${{\bf{F}}_u} = {\left[ {{\mathbf{\hat H}_{AR}},{\mathbf{\hat H}_{BR}}} \right]^H}$. From \eqref{y_r} and \eqref{Z_r}, we can derive the received signal of the $i$th pair of users after linear processing\cite{kong2018multipair}.

By contrast, in the DL phase, the signals received from all the users are decoded at the relay before transmission, while ${{\bf{F}}_d}$ is the linear precoding matrix which is applicable for the decoded signal $\bf{x}$. Therefore, the DL signal transmitted from the relay ${T_R}$ is given by
\vspace{-0.2cm}
\begin{equation}\label{y_t^{DF}}\vspace{-0.2cm}
\mathbf{y}_t = {\rho}{{\bf{F}}_d}\mathbf{x},
\end{equation}
where $\mathbf{x} = {\left[ {\mathbf{x}_A^T,\mathbf{x}_B^T} \right]^T}$, ${{\bf{F}}_d} = {\left[ {{\mathbf{\hat H}_{BR}},{\mathbf{\hat H}_{AR}}} \right]^*}$, and ${\rho}$ is adjusted for satisfying the transmit power constraint at the relay, i.e., $\mathbb{E}\left\{ {{{\left\| {\mathbf{y}_t} \right\|}^2}} \right\} = {p_r}$ and $\rho \!= \!\sqrt {\frac{{{p_r}}}{{\mathbb{E}\left\{ {\left\| {\bf{F}}_d  \right\|^2} \right\}}}} $.

Finally, the signals are forwarded to their respective destinations by the relay and the DL signal received at ${U_{X,i}}$ ($X\!=\!A$ or $B$) is given by
\vspace{-0.25cm}
\begin{equation}\label{z_{X,i}}\vspace{-0.2cm}
z_{X,i} = \mathbf{h}_{XR,i}^T\mathbf{y}_t + {\bf{n}}_{X,i},
\end{equation}
where ${\bf{n}}_{X,i} \sim {\cal C}{\cal N}\left( {0,\sigma _{X,i}^2} \right)$ is the AWGN at ${U_{X,i}}$.

\begin{figure*}[hb]
\setcounter{equation}{15}
\hrulefill
\vspace{-0.2cm}
\begin{equation}\label{gamma{RX,i}^{DF}}
{\rm{SINR}_{RX,i}^{}} = \frac{{\big| {\mathbb{E}\big\{ {\mathbf{h}_{XR,i}^T\mathbf{\hat h}_{XR,i}^*} \big\}} \big|_{}^2}}{{Var\big\{ {\mathbf{h}_{XR,i}^T\mathbf{\hat h}_{XR,i}^*} \big\} + Var\big\{ {\mathbf{h}_{XR,i}^T\mathbf{\hat h}_{\bar{X}R,i}^*} \big\} + \sum\limits_{j \ne i}^{} {\big( {\mathbb{E}\big\{ {\big| {\mathbf{h}_{XR,i}^T\mathbf{\hat h}_{BR,j}^*} \big|_{}^2} \big\} + \mathbb{E}\big\{ {\big| {\mathbf{h}_{XR,i}^T\mathbf{\hat h}_{AR,j}^*} \big|_{}^2} \big\}} \big) + \frac{1}{{\rho _{}^2}}} }}
\end{equation}
\vspace{-0.2cm}
\normalsize
\vspace{-0.2cm}
\end{figure*}

\begin{figure*}[hb]
\vspace{-0.2cm}
\setcounter{equation}{24}

\hrulefill
\vspace{-0.2cm}
\begin{align} \label{xi_{XR,ij}}
\xi_{XR,ij}={\frac{{{\beta _{XR,i}}{\beta _{AR,j}}}}{{\left( {{K_{XR,i}} + 1} \right)\left( {{K_{AR,j}} + 1} \right)}} \times \left( {\frac{{{K_{XR,i}} + {\eta _{XR,i}}}}{{1 + {\tau _p}{p_p}{\beta _{AR,j}}}} + {K_{AR,j}}{\eta _{XR,i}} + {K_{XR,i}}{\eta _{AR,j}} + {\eta _{XR,i}}{\eta _{AR,j}}} \right)}
\end{align}
\hrulefill
\vspace{-0.2cm}
\begin{align} \label{psi_{XR,ij}}
\chi_{XR,ij}={\frac{{{\beta _{XR,i}}{\beta _{BR,j}}}}{{\left( {{K_{XR,i}} + 1} \right)\left( {{K_{BR,j}} + 1} \right)}} \times \left( {\frac{{{K_{XR,i}} + {\eta _{XR,i}}}}{{1 + {\tau _p}{p_p}{\beta _{BR,j}}}} + {K_{BR,j}}{\eta _{XR,i}} + {K_{XR,i}}{\eta _{BR,j}} + {\eta _{XR,i}}{\eta _{BR,j}}} \right)}
\end{align}
\normalsize
\vspace{-0.2cm}
\hrulefill
\begin{align} \label{zeta_{XR,ij}}
&\zeta_{XR,ij}=\frac{{\beta _{AR,i}^{}\beta _{XR,j}^{}}}{{\left( {{K_{XR,j}} + 1} \right)\left( {{K_{AR,i}} + 1} \right)}}
\left(\!\! {\frac{{{K_{XR,j}}\! +\! {\eta _{XR,j}}}}{{1\! +\! {\tau _p}{p_p}\beta _{AR,i}^{}}}\! +\! {K_{AR,i}}{\eta _{XR,j}}\! +\! {K_{XR,j}}{\eta _{AR,i}} \!+ {\eta _{XR,j}}{\eta _{AR,i}}} \!\!\right)
\end{align}
\setcounter{equation}{33}
\normalsize
\vspace{-0.6cm}

\end{figure*}

\vspace{-0.1cm}
\section{Spectral Efficiency Analysis}

In this section, we investigate the SE of the TW half-duplex DF relaying system when imperfect CSI is considered at ${T_{R}}$.
The achievable sum SE of the system is defined as
\setcounter{equation}{5}
\vspace{-0.15cm}
\begin{equation}\label{R}
\vspace{-0.1cm}
R = \sum\limits_{i = 1}^N {R_i}
\end{equation}
where ${R_{i}}$ is the SE of the $i$th user pair, and it is defined as
\vspace{-0.15cm}
\begin{equation}\label{R-i}
\vspace{-0.05cm}
{R_i} = \min \left( {{R_{1,i}},{R_{2,i}}} \right)
\end{equation}
In \eqref{R-i}, ${R_{1,i}}$ is the SE of the $i$th user pair in the UL phase and
${R_{2,i}}$ is the SE of the $i$th user pair in the DL phase.
Without loss of generality, we will derive its closed-form approximations for the $i$th user pair.

\emph{1)}
As is in practical cases, the relay uses the estimated channel for signal detection.
Then, for the imperfect CSI case, ${R_{1,i}}$ is obtained as
\vspace{-0.25cm}
\begin{equation}\label{R_{1,i}^{IP}}
\vspace{-0.1cm}
{R_{1,i}}
 =  \lambda \mathbb{E} \left\{ {{{\log }_2}\left( {1 + \frac{{A_i^{} + B_i^{}}}{{C_i^{} + D_i^{} + E_i^{}}}} \right)} \right\},
\end{equation}
\!\!\!where we have $\lambda  = \frac{{T - \tau }}{{2T}}$, while  $A_i$ and $B_i$  represent the signals which ${U_{A,i}}$ and ${U_{B,i}}$ want to receive. Furthermore, $C_i$, $D_i$ and $E_i$ represent the estimation error, the inter-user interference and the compound noise,
respectively. The expressions of these five terms are given by
\vspace{-0.15cm}
\begin{align}
\vspace{-0.3cm}
 A_i^{} = {p_{A,i}}\left( {{{\big| {\mathbf{\hat h}_{AR,i}^H\mathbf{\hat h}_{AR,i}^{}} \big|}^2} + {{\big| {\mathbf{\hat h}_{BR,i}^H\mathbf{\hat h}_{AR,i}^{}} \big|}^2}} \right),\label{hat A_i^{IP}}
\end{align}
\vspace{-0.5cm}
\begin{align}
 B_i^{} = {p_{B,i}}\left( {{{\big| {\mathbf{\hat h}_{AR,i}^H\mathbf{\hat h}_{BR,i}^{}} \big|}^2} + {{\big| {\mathbf{\hat h}_{BR,i}^H\mathbf{\hat h}_{BR,i}^{}} \big|}^2}} \right),\label{hat B_i^{IP}}
\end{align}
\vspace{-0.5cm}
\begin{align}
 C_i^{} = {p_{A,i}}\left( {{{\big| {\mathbf{\hat h}_{AR,i}^H\mathbf{e}_{AR,i}^{}} \big|}^2} + {{\big| {\mathbf{\hat h}_{BR,i}^H\mathbf{e}_{AR,i}^{}} \big|}^2}} \right)\nonumber \\
 + {p_{B,i}}\left( {{{\big| {\mathbf{\hat h}_{AR,i}^H\mathbf{e}_{BR,i}^{}} \big|}^2} + {{\big| {\mathbf{\hat h}_{BR,i}^H\mathbf{e}_{BR,i}^{}} \big|}^2}} \right),\label{hat C_i^{IP}}
\end{align}
\vspace{-0.55cm}
\begin{align}
 D_i^{} = \sum\limits_{j \ne i}^{} {{p_{A,j}}} \left( {{{\big| {\mathbf{\hat h}_{AR,i}^H\mathbf{h}_{AR,j}^{}} \big|}^2} + {{\big| {\mathbf{\hat h}_{BR,i}^H\mathbf{h}_{AR,j}^{}} \big|}^2}} \right)\nonumber \\
+ \sum\limits_{j \ne i}^{} {{p_{B,j}}\left( {{{\big| {\mathbf{\hat h}_{AR,i}^H\mathbf{h}_{BR,j}^{}} \big|}^2} + {{\big| {\mathbf{\hat h}_{BR,i}^H\mathbf{h}_{BR,j}^{}} \big|}^2}} \right)} ,\label{hat D_i^{IP}}
\end{align}
\vspace{-0.55cm}
\begin{align}
\vspace{-0.4cm}
 E_i^{} = {\big\| {\mathbf{\hat h}_{AR,i}^{}} \big\|^2} + {\big\| {\mathbf{\hat h}_{BR,i}^{}} \big\|^2}.\label{hat E_i^{IP}}
\end{align}

Furthermore, the SE of the link ${U_{X,i}} \to {T_R}$ ($X\!=\!A$ or $B$) is given by
\vspace{-0.3cm}
\begin{equation}\label{hat R_{XR,i}^{DF}}
\vspace{-0.05cm}
R_{XR,i}^{} = \lambda \mathbb{E} \left\{ {{{\log }_2}\left( {1 + \frac{{X_i^{}}}{{C_i^{} + D_i^{} + E_i^{}}}} \right)} \right\}.
\end{equation}

In the DL phase, we can express the signals processed at ${U_{X,i}}$, ($X\!=\!A$ or $B$), after partial SIC according to (21) in \cite{kong2018multipair}.

The SE of the link ${T_R} \to {U_{X,i}}$ can be expressed as
\vspace{-0.1cm}
\begin{equation}\label{R_{RX,i}^{IP}}
\vspace{-0.05cm}
R_{RX,i}^{} = \lambda{\log _2}\left( {1 + \rm{SINR}_{RX,i}^{}} \right),
\end{equation}
where $X\!\! \in\!\! \left\{\!{A,B}\! \right\}$ and ${\rm{SINR}_{RX,i}^{}}$ defined in \eqref{gamma{RX,i}^{DF}} (at the bottom of the next page) is the corresponding SINR of ${U_{X,i}}$.
In \eqref{gamma{RX,i}^{DF}},  $\left\{ {\bar X} \right\} = \left\{ {A,B} \right\} \backslash  \left\{ {X} \right\}$.
\setcounter{equation}{16}
Thus, ${R_{2,i}}$ is defined as
\vspace{-0.1cm}
\begin{align}\label{R_{2,i}IP}
\vspace{-0.05cm}
{R_{2,i}}= \min \left( {{R_{AR,i}^{}},{R_{RB,i}^{}}} \right) + \min \left( {{R_{BR,i}^{}},{R_{RA,i}^{}}} \right).
\end{align}

\emph{2)}
In the following theorem, the closed-form approximation of $R_i^{}$ under imperfect CSI is formulated.
\begin{theorem}\label{theorem2}
In the imperfect CSI scenario, when the numberof the relay antennas, $M$, tends to infinity, the SE of the $i$th user pair employing MRC receivers is approximated as
\vspace{-0.15cm}
\begin{equation}\label{tilde R_i^{IP}}
\vspace{-0.05cm}
\tilde R_i^{} = \min \left( {\tilde R_{1,i}^{},\tilde R_{2,i}^{}} \right),
\end{equation}
where
\vspace{-0.3cm}
\begin{equation}\label{tilde hat R_{1,i}^{DF}}
\tilde R_{1,i}^{} = \lambda{\log _2}\Big( {1 + \frac{ M{p_{A,i}}\omega_{AR,i}^{2} + M{p_{B,i}}\omega_{BR,i}^{2} }{{\left( {\omega_{AR,i} + \omega_{BR,i} } \right){q_i} + Q_i}}} \Big),
\end{equation}
\vspace{-0.1cm}
\begin{equation}\label{tilde hat R_{2,i}^{DF}}
\tilde R_{2,i}^{} = \min \left( {\tilde R_{AR,i}^{},\tilde R_{RB,i}^{}} \right) + \min \left( {\tilde R_{BR,i}^{},\tilde R_{RA,i}^{}} \right),
\end{equation}
with
\vspace{-0.2cm}
\begin{equation}\label{tilde hat R_{AR,i}^{DF}}
\tilde R_{XR,i}^{} = \lambda{\log _2}\Big( {1 + \frac{ M{p_{X,i}}\omega_{XR,i}^{2} }{{\left( {\omega_{AR,i} + \omega_{BR,i}} \right){q_i} + Q_i}}} \Big),
\end{equation}
\begin{equation}\label{tilde hat R_{RA,i}^{DF}}
\tilde R_{RX,i}^{} = \lambda{\log _2}\Bigg( {1 + \frac{{M {p_r} {{ {\omega_{XR,i}^2} }}}}{{\sum\limits_{j = 1}^N {\left( { \omega_{AR,j} + \omega_{BR,j} + Z_{ij} } \right)} }}} \Bigg),
\end{equation}
\vspace{-0.5cm}
\begin{equation} \label{Q}
Q_i\!\! =\!\! \sum\limits_{j \ne i}^N( p_{A,j}(\xi_{AR,ij}+\xi_{BR,ij})
     \!+\!  p_{B,j}(\chi_{AR,ij}+\chi_{BR,ij})),
\end{equation}
\vspace{-0.5cm}
\begin{equation}\label{Z}
Z_{ij} = p_r (\zeta_{AR,ij}+\zeta_{BR,ij}),
\end{equation}
and $X\!\! \in\!\! \left\{{A,B} \right\}$,  ${\omega_{XR,i}^{}} =  {\frac{{\beta _{XR,i}^{}}}{{{K_{XR,i}} + 1}}({K_{XR,i}} + {\eta _{XR,i}})} $,  $\eta_{XR,i}= \frac{ {\tau _p}{p_p}{\beta _{XR,i}} }{ 1+ {\tau _p}{p_p}{\beta _{XR,i}}}$, ${q_i} = {p_{A,i}}\sigma _{AR,i}^2 + {p_{B,i}}\sigma _{BR,i}^2 + 1$. The terms $\xi_{AR,ij}$  and $\chi_{AR,ij}$ in \eqref{Q} are respectively defined by  \eqref{xi_{XR,ij}} and \eqref{psi_{XR,ij}} at the bottom of the next page, while $\zeta_{XR,ij}$ in \eqref{Z} is defined by \eqref{zeta_{XR,ij}} at the bottom of the next page.

\end{theorem}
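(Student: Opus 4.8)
\emph{Proof sketch.} The plan is to exploit channel hardening in the massive-antenna regime: as $M \to \infty$, each quadratic form appearing in the numerators and denominators of \eqref{R_{1,i}^{IP}}, \eqref{hat R_{XR,i}^{DF}} and \eqref{R_{RX,i}^{IP}} is a sum of $M$ weakly dependent contributions, so after normalisation by the appropriate power of $M$ it concentrates about its mean. Consequently the instantaneous SINR converges to a deterministic limit, and the expectation of the logarithm may be replaced by the logarithm of the ratio of the expectations, i.e. $\mathbb{E}\{\log_2(1 + \mathrm{num}/\mathrm{den})\} \to \log_2(1 + \mathbb{E}\{\mathrm{num}\}/\mathbb{E}\{\mathrm{den}\})$, with the error vanishing as $M \to \infty$. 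The task therefore reduces to evaluating the first and second moments of inner products of estimated and true Rician channel vectors, which I carry out by splitting every vector into its deterministic LOS part and its zero-mean scattered part, using the MMSE orthogonality property that $\mathbf{\hat h}_{XR,i}$ and $\mathbf{e}_{XR,i}$ are independent with per-antenna powers $\omega_{XR,i}$ and $\sigma_{XR,i}^2$, respectively.

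First I would handle the UL term $\tilde R_{1,i}$. For the desired-signal terms I retain the dominant self-correlation contributions $\mathbb{E}\{\|\mathbf{\hat h}_{XR,i}\|^4\} = M^2\omega_{XR,i}^2 + O(M)$, whereas the cross terms such as $\mathbb{E}\{|\mathbf{\hat h}_{BR,i}^H\mathbf{\hat h}_{AR,i}|^2\}$ are only $O(M)$ because distinct channels are asymptotically orthogonal; hence $\mathbb{E}\{A_i+B_i\} = M^2(p_{A,i}\omega_{AR,i}^2 + p_{B,i}\omega_{BR,i}^2) + O(M)$. For the denominator, the independence of $\mathbf{\hat h}$ and $\mathbf{e}$ gives $\mathbb{E}\{C_i\} = M(\omega_{AR,i}+\omega_{BR,i})(p_{A,i}\sigma_{AR,i}^2 + p_{B,i}\sigma_{BR,i}^2) + O(1)$, the compound-noise term gives $\mathbb{E}\{E_i\} = M(\omega_{AR,i}+\omega_{BR,i})$, and the interference term collects the $O(M)$ cross-moments that define $\xi_{XR,ij}$ and $\chi_{XR,ij}$, so that $\mathbb{E}\{D_i\} = M\,Q_i$. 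Dividing numerator and denominator by $M$ and keeping the leading order yields exactly \eqref{tilde hat R_{1,i}^{DF}}, with $q_i$ absorbing the error and noise contributions; restricting the numerator to a single user reproduces \eqref{tilde hat R_{AR,i}^{DF}}.

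Next I would treat the DL. Here \eqref{gamma{RX,i}^{DF}} already expresses the SINR through one mean, two variances and two interference expectations, so the remaining work is purely moment evaluation. The numerator $|\mathbb{E}\{\mathbf{h}_{XR,i}^T\mathbf{\hat h}_{XR,i}^*\}|^2$ is dominated by $M^2\omega_{XR,i}^2$, while the variance and interference terms are $O(M)$ and, after the LOS/scattered expansion, generate the coefficients $\zeta_{XR,ij}$ of \eqref{zeta_{XR,ij}}. The power-normalisation constant enters through $\rho^2 = p_r / \mathbb{E}\{\|\mathbf{F}_d\|^2\}$ with $\mathbb{E}\{\|\mathbf{F}_d\|^2\} = M\sum_{j=1}^{N}(\omega_{AR,j}+\omega_{BR,j})$, which supplies the $p_r$ scaling and the summation in the denominator of \eqref{tilde hat R_{RA,i}^{DF}}. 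Substituting these moments and retaining leading orders gives \eqref{tilde hat R_{RA,i}^{DF}}; combining the four per-link rates through the two minima produces \eqref{tilde hat R_{2,i}^{DF}}, and finally $\tilde R_i = \min(\tilde R_{1,i}, \tilde R_{2,i})$ follows from \eqref{R-i}.

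The main obstacle is the bookkeeping of the second-order cross-moments $\mathbb{E}\{|\mathbf{\hat h}_{XR,i}^H\mathbf{h}_{YR,j}|^2\}$ that yield $\xi_{XR,ij}$, $\chi_{XR,ij}$ and $\zeta_{XR,ij}$: each such term splits into LOS--LOS, LOS--scattered, scattered--LOS and scattered--scattered products, and one must track which combinations survive at order $M$ while weighting them correctly by the $K$-factors, the path losses $\beta$ and the estimation-quality factors $\eta$ and $\sigma^2$. Care is also needed to confirm that the suppressed cross terms are genuinely $O(M)$ relative to the $O(M^2)$ signal power, so that the replacement of $\mathbb{E}\{\log(\cdot)\}$ by $\log(\mathbb{E}\{\cdot\})$ is asymptotically tight; this rests on the channel-hardening concentration of the quadratic forms, which holds because the scattered components are i.i.d. Gaussian and the LOS components are bounded.
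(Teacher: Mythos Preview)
Your proposal is correct and follows essentially the same route as the paper's proof: replace $\mathbb{E}\{\log_2(1+\cdot)\}$ by $\log_2(1+\mathbb{E}\{\mathrm{num}\}/\mathbb{E}\{\mathrm{den}\})$, then evaluate each moment by decomposing the Rician channel into LOS and scattered parts, using the MMSE orthogonality of $\mathbf{\hat h}$ and $\mathbf{e}$, and retaining the leading powers of $M$. The only cosmetic difference is that the paper outsources these two steps to Lemma~1 and Lemma~5 of \cite{zhang2014power}, whereas you justify them directly via channel hardening and explicit LOS/scattered bookkeeping; the computed moments and the resulting expressions for $\tilde R_{1,i}$, $\tilde R_{XR,i}$, $\tilde R_{RX,i}$ and $\rho$ coincide.
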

\begin{proof}
See Appendix A.
\end{proof}

\emph{Theorem 1} presents the approximate expression of the SE  for the $i$th user pair under imperfect CSI. When ${\beta _{AR,i}^{}}$, ${\beta_{BR,i}}$, $K_{AR,i}$, $K_{BR,i}$, $\sigma_{A,i}$ and $\sigma_{B,i}$ are kept fixed, the SE is determined by the number of the user pairs $N$,  the number of the relay antennas $M$ and the transmit power $p_{A,i}$, $p_{B,i}$ and $p_{r}$. For fixed $p_{A,i}$, $p_{B,i}$, ${p_r}$ and ${p_p}$, we can see that $\tilde R_{1,i}^{}$ and $\tilde R_{2,i}^{}$ increase unboundedly in both the UL and DL phases as the number of relay antennas increases. Furthermore, in following Section IV, we will use \emph{Theorem 1} to investigate how the powers can be scaled down when the number of the relay antennas increases infinitely. It can be found that the sum SEs will converge to the upper limits for three typical cases, as  $M \to \infty $.
Additionally, the simulation in Section V verifies that the SE of the $i$th user pair increases with the Rician $K$-factor.

\vspace{-0.4cm}

\section{Power-Scaling Laws}
In this section, we quantify the power-scaling laws explicitly,
we analyze how the powers can be scaled down upon increasing $M$, while maintaining a certain SE.
Additionally, the transmit power of all users is set to be the same, i.e., ${p_{X,i}}$ = ${p_u}$, $X\!\! \in\!\! \left\{\!{A,B}\! \right\}$.

\setcounter{equation}{27}
We have ${p_u} = \frac{{{E_u}}}{{{M^\alpha }}}$, ${p_r} = \frac{{{E_r}}}{{{M^\varepsilon }}}$,  and ${p_p} = \frac{{{E_p}}}{{{M^\gamma }}}$, while ${{E_u}}$, ${{E_r}}$ and ${{E_p}}$ are all constants, $\alpha > 0$, $\varepsilon > 0$, and $\gamma > 0$.
Then,  it can be obtained from \emph{Theorem 1} that as $M \to \infty $, we have $\eta_{XR,i} \xrightarrow{} 0$, $Q_i \xrightarrow{} 0$, $Z_{ij} \xrightarrow{} 0$,  $q_i\xrightarrow{} 1$. Thus, as $M \to \infty $, $\tilde R_i$  defined by \eqref{tilde R_i^{IP}} in \emph{Theorem 1} converges according to
\begin{equation}\label{R_is3^{DF}}
\tilde R_i \xrightarrow{M \to \infty }  \min \left( {\bar R_{1,i}^{},\bar R_{2,i}^{}} \right) \triangleq \bar{R}_i,
\end{equation}
where
\vspace{-0.4cm}
\begin{equation}\label{bar Rs3_{1,i}^{DF}}
\bar R_{1,i}^{}=\lambda{\log _2}\Big( {1 + \!\!\frac{{E_u^{}}}{{M_{}^{\alpha  - 1}}}\frac{{\psi_{AR,i}^2+\psi_{BR,i}^2}}{{ \psi_{AR,i}+\psi_{BR,i} }}} \Big),
\end{equation}
\vspace{-0.3cm}
\begin{equation}\label{bar Rs3_{2,i}^{DF}}
\vspace{-0.05cm}
\bar R_{2,i}^{} = \min \left( {\bar R_{AR,i}^{},\bar R_{RB,i}^{}} \right) + \min \left( {\bar R_{BR,i}^{},\bar R_{RA,i}^{}} \right),
\end{equation}
with
\vspace{-0.4cm}
\begin{equation}
\bar R_{XR,i}^{}=\lambda{\log _2}\Big( {1 + \!\!\frac{{E_u^{}}}{{M_{}^{\alpha  - 1}}}\frac{{\psi_{XR,i}^2}}{{ \psi_{AR,i}+\psi_{BR,i} }}} \Big),
\end{equation}
\vspace{-0.5cm}
\begin{equation}
\bar R_{RX,i}^{} = \lambda{\log _2}\Bigg( {\!\!1 +\!\!\frac{{E_r^{}}}{{M_{}^{\varepsilon  - 1}}} \frac{{{\psi_{XR,i}^2}}}{{\sum\limits_{j = 1}^N {\left( \psi_{AR,i}+\psi_{BR,i} \right)} }}} \Bigg),\label{bar Rs3_{RA,i}^{DF}}
\end{equation}
\vspace{-0.4cm}
\begin{equation}
{\psi_{XR,i}^{}} =  \frac{{\beta _{XR,i}^{}}{K_{XR,i}}}{{{K_{XR,i}} + 1}} .
\vspace{-0.2cm}
\end{equation}
We observe that the asymptotic SEs of the $i$th user pair are closely related to the values of $\alpha$ and $\varepsilon$. Moreover, we find that $R_i^{}$ is independent of $\gamma $ when $M$ becomes large. Next, we analyze the effect of $\alpha$ and $\varepsilon$ on the SE.
\begin{itemize}
\item When $\alpha > 1$ or $\varepsilon > 1$, $R_i^{}$ converge to zero. When ${p_u}$ or ${p_r}$ is reduced excessively, the asymptotic SEs will tend to zero.
\item When $0 < \alpha  < 1$ and $0 < \varepsilon  < 1$, $R_i^{}$ grow without limit. This suggests that ${p_u}$ and ${p_r}$ can be reduced more drastically to obtain fixed SEs.

\item When $\alpha = 1$, $0 < \varepsilon  \le 1$ or $\varepsilon = 1$, $0 < \alpha \le 1$, $R_i^{}$ converge to a positive limit. We now study how much ${p_u}$ or ${p_r}$ or both can be scaled down, while maintaining a certain SE. We focus on three cases: 1) Case I: $\alpha  = \varepsilon  = 1$; 2) Case II: $\alpha  = 1$, and $ 0 < \varepsilon  < 1$; 3) Case III: $ 0 < \alpha  < 1$ and $\varepsilon  = 1$.
\end{itemize}

\vspace{-0.4cm}
\subsection{\emph{\bf{Case I:}} $\alpha  = \varepsilon  = 1$.}

For fixed ${{E_u}}$, ${{E_r}}$ and ${E_p}$, by substituting $\alpha  = 1$  and $ \varepsilon  = 1$  into \eqref{bar Rs3_{1,i}^{DF}} - \eqref{bar Rs3_{RA,i}^{DF}},
as $M \to \infty$, we can simplify the SE in \eqref{bar Rs3_{1,i}^{DF}} - \eqref{bar Rs3_{RA,i}^{DF}} as
\vspace{-0.4cm}
\begin{align}\label{bar Rc4_{1,i}^{DF}}
\bar R_{1,i}^{} = \lambda{\log _2}\bigg( {1 + \frac{{E_u^{}\left( {{{ {\psi_{AR,i}^2} }} + {{ {\psi_{BR,i}^2} }}} \right)}}{{ {\psi_{AR,i} + \psi_{BR,i}} }}} \bigg),
\end{align}
\vspace{-0.3cm}
\begin{equation}\label{bar Rc4_{2,i}^{DF}}
\bar R_{2,i}^{} = \min \left( {\bar R_{AR,i}^{},\bar R_{RB,i}^{}} \right) + \min \left( {\bar R_{BR,i}^{},\bar R_{RA,i}^{}} \right),
\end{equation}
with
\vspace{-0.2cm}
\begin{equation}
\bar R_{XR,i}^{}=\lambda{\log _2}\Big( {1 + \!\!\frac{E_u {\psi_{XR,i}^2}}{{ \psi_{AR,i}+\psi_{BR,i} }}} \Big),
\end{equation}
\vspace{-0.3cm}
\begin{equation}\label{bar Rc4_{RA,i}^{DF}}
\bar R_{RX,i}^{} = \lambda {\log _2}\Bigg( {1 + \frac{{{E_r}{{ \psi_{XR,i}^2 }}}}{{\sum\limits_{j = 1}^N {\left( {\psi_{AR,j} + \psi_{BR,j}} \right)} }}} \Bigg).
\end{equation}

Based on \eqref{bar Rc4_{1,i}^{DF}} - \eqref{bar Rc4_{RA,i}^{DF}},
the limit of $R_i^{}$ also increases with ${{E_u}}$ and ${{E_r}}$, and decreases with $N$.

\vspace{-0.4cm}
\subsection{\emph{\bf{Case II:}} $\alpha  = 1$, and $ 0 < \varepsilon  < 1$.}
\begin{figure*}[hb]
\vspace{-0.2cm}
\setcounter{equation}{37}

\hrulefill
\vspace{-0.1cm}
\begin{align}\label{bar Rc6_{1,i}^{DF}}
\tilde R_i \to&  \bar{R}_{2,i} = \lambda{\log _2}\Big( {1 + {{E_r^{}{{ {\psi_{AR,i}^2} }}}}\Big/{{\sum\limits_{j = 1}^N {\left( {\psi_{AR,j} + \psi_{BR,j}} \right)} }}} \Big) +  \lambda{\log _2}\Big( {1 + {{E_r^{}{{ {\psi_{BR,i}^2} }}}} \Big/ {{\sum\limits_{j = 1}^N {\left( {\psi_{AR,j} + \psi_{BR,j}} \right)} }}} \Big)
\vspace{-0.15cm}
\end{align}

\normalsize
\vspace{-0.9cm}
\end{figure*}

For fixed ${{E_u}}$, ${{E_r}}$ and ${E_p}$, substituting $\alpha  = 1$ and $0 < \varepsilon  < 1$ into \eqref{R_is3^{DF}} - \eqref{bar Rs3_{RA,i}^{DF}}, we can obtain that $\tilde R_i$ converges to $\bar R_{1,i}$ given by \eqref{bar Rc4_{1,i}^{DF}}, i.e., $\tilde R_i^{} \to \bar R_{1,i}^{} $,  as $M \to \infty$.

We find that the asymptotic SE of $R_i^{}$ is decided by the UL phase, which increases with ${{E_u}}$, while ${{E_r}}$ has no effect on the asymptotic SE. Furthermore, as $M$ increases, the SE of each user will become lower in the UL than that of the DL phase.

\vspace{-0.3cm}
\subsection{\emph{\bf{Case III:}} $ 0 < \alpha  < 1$ and $\varepsilon  = 1$.}
For fixed ${{E_u}}$, ${{E_r}}$ and ${E_p}$,  by substituting $0 < \alpha  < 1$ and $\varepsilon  = 1$ into \eqref{R_is3^{DF}} - \eqref{bar Rs3_{RA,i}^{DF}}, $\tilde R_i$ converges to $\bar R_{2,i}$ given by \eqref{bar Rc6_{1,i}^{DF}} (at the bottom of the next page), i.e., $\tilde R_i^{} \to \bar R_{2,i}^{}$,  when $M \to \infty$.

The asymptotic SE of $R_i^{}$ increases with ${{E_r}}$, while ${{E_u}}$ has no effect on the asymptotic SE. Furthermore, as $M$ tends to infinity, the SE of each user in the DL phase will be lower than that of the UL phase.

\begin{figure}[t]
\vspace{-0.9cm}
\centering
\includegraphics[scale=0.52]{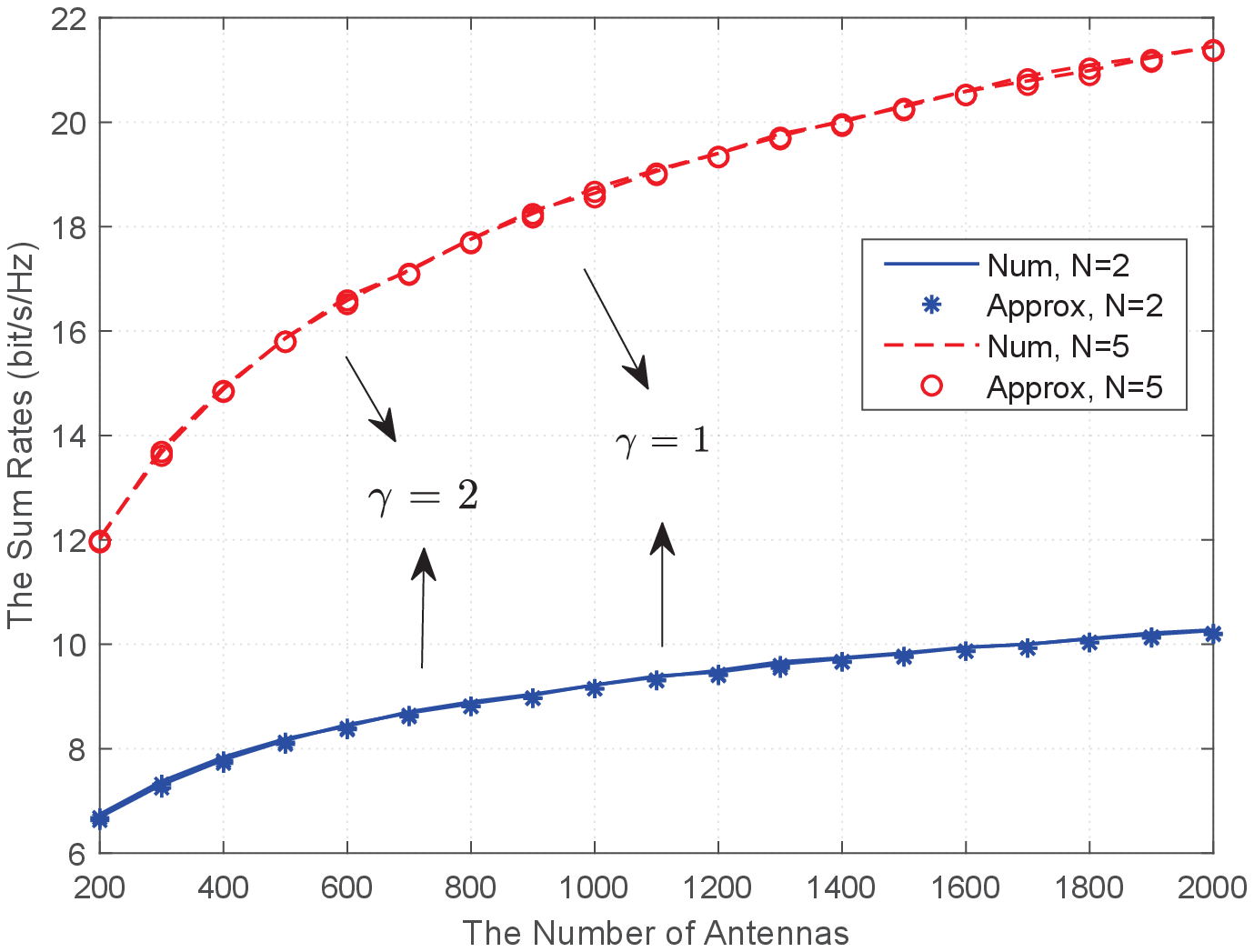}
\vspace{-0.2cm}
\caption{Sum SEs versus $M$ for ${p_u} = {E_u}$, ${p_r} = {E_r}$, and ${p_p} = \frac{{{E_p}}}{{{M^\gamma }}}$.}
\label{figure_pp}
\vspace{-0.2cm}
\end{figure}

\begin{figure}[t]
\centering
\includegraphics[scale=0.52]{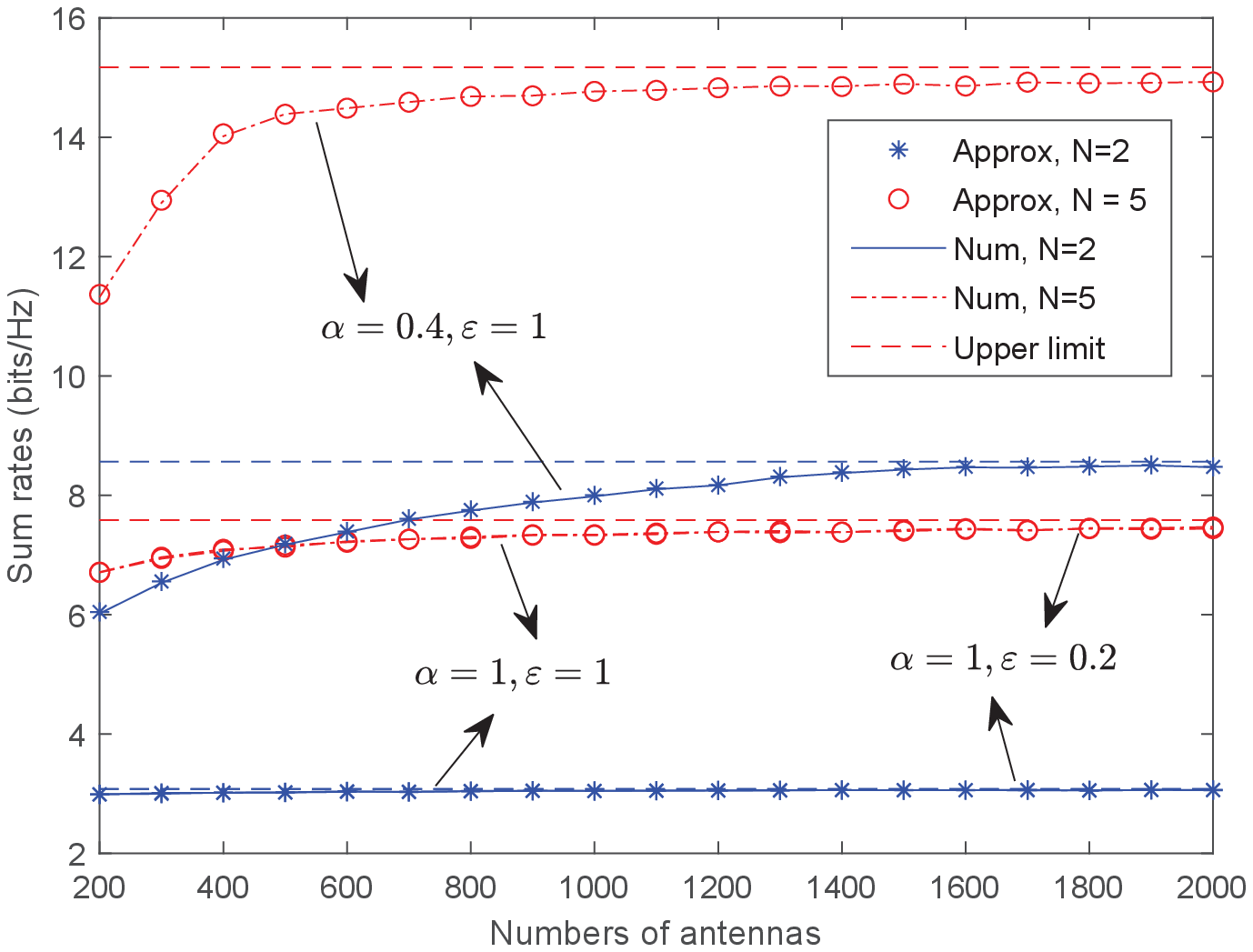}
\vspace{-0.2cm}
\caption{Sum SEs versus $M$ for ${p_u} = \frac{{{E_u}}}{{{M^\alpha }}}$, ${p_r} = \frac{{{E_r}}}{{{M^\varepsilon }}}$, ${p_p} = \frac{{{E_p}}}{{{M^\gamma }}}$, ${\alpha  \le 1}$, and ${\varepsilon  \le {\rm{1}}}$.}
\label{figa}
\vspace{-0.6cm}
\end{figure}

\vspace{-0.4cm}

\section{Numerical Results}
In this section, we verify the main results of this letter by numerical results.
The length of the coherence time is set to $T$ = 196 symbols.
For simplicity, we assume that ${E_u} = {E_p} = 10$ dB, ${E_r} = 20$ dB, ${\beta _{AR,i}} = {\beta _{BR,i}} = 1$. Furthermore, each Rician K-factor is set to the same value.

Numerical results are provided for the sum SE with ${K_{XR,i}} = 5$ dB for all $i$ and  $N = 2$ or $5$ in Fig. \ref{figure_pp} - Fig. \ref{fig:subfig:b}.
In Fig. \ref{figure_pp}, the exact expressions and the approximations are compared. It is observed that the pairs of curves match well for $N = 2$ and $5$.
The sum SE increases with $M$, as expected. Furthermore, when $N = 5$, the sum SE is almost twice as high as that for $N =2$. This indicates that the sum SE increases with $N$. Additionally, when ${p_u}$ and ${p_r}$ are unchanged, the transmit power ${p_p}$ of the pilot symbol is cut down, we find that the sum SE is independent of the choice of $\gamma $, when $M$ becomes large.

In Fig. \ref{figa} and Fig. \ref{fig:subfig:b},  the corresponding sum SEs and upper limits are presented when ${p_u}$ and ${p_r}$ are scaled down.  Explicitly, Fig. \ref{figa} investigates three cases using different settings of $\alpha$ and $\varepsilon$. In line with Case I-III of Section IV,
the sum SEs saturate as $M$ tends to infinity in all three circumstances. We observe that $\alpha =1 $, $\varepsilon = 1$  and $\alpha =1 $, $\varepsilon = 0.2$ achieve the same sum SE, because it is decided by the UL phase.
Fig. \ref{figa} also illustrates the corresponding upper limits  $\bar{R}_i$ defined in \eqref{R_is3^{DF}} for the sum SEs. It can be observed that the sum SEs converge to the corresponding upper limits with the increasing $M$ for these three different cases. Fig. \ref{fig:subfig:b} studies three different scenarios, i.e., 1) $\alpha  > 1$, and $\varepsilon  > 0$, 2) $\alpha  > 0$, and $\varepsilon > 1$, 3) $ \alpha > 1$, and $\varepsilon > 1$.
As expected, the sum SEs converge to zero as $M$ grows. The reduction of the sum SEs is faster for larger scaling parameters for different $N$.

Fig. \ref{figure_K} depicts the sum SE versus the Rician $K$-factor. Herein, we set $N=5$, ${p_u} = \frac{{{E_u}}}{M}$, ${p_r} = \frac{{{E_r}}}{M}$, and ${p_p} = \frac{{{E_p}}}{M}$.
We compare the sum SEs when ${K_{XR,i}} =3, 5, 10$ dB.
It is clear that the approximations match well with the exact expressions in all scenarios. The sum SE increases with $M$, as expected.
As the Rician $K$-factor grows, the sum SE increases.

\begin{figure}[t]
\vspace{-1.1cm}
\centering
\includegraphics[scale=0.42]{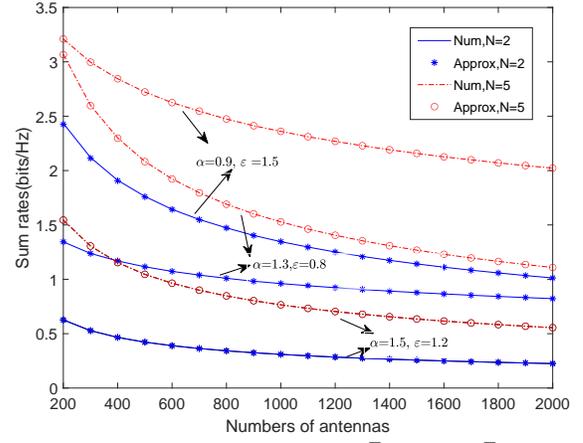}
\vspace{-0.2cm}
\caption{Sum SEs versus $M$ for ${p_u} = \frac{{{E_u}}}{{{M^\alpha }}}$, ${p_r} = \frac{{{E_r}}}{{{M^\varepsilon }}}$, ${p_p} = \frac{{{E_p}}}{{{M^\gamma }}}$, $\alpha  > 1$, or $\varepsilon  > 1$.}
\label{fig:subfig:b}
\vspace{-0.3cm}
\end{figure}

\vspace{-0.7cm}
\begin{figure} [t]
\vspace{-0.1cm}
\centering
\includegraphics[scale=0.52]{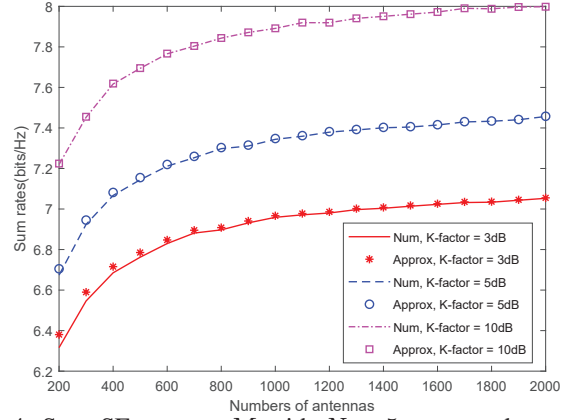}
\vspace{-0.3cm}
\caption{Sum SEs versus $M$ with $N=5$ users and ${p_u} = \frac{{{E_u}}}{M}$, ${p_r} = \frac{{{E_r}}}{M}$, and ${p_p} = \frac{{{E_p}}}{{{M}}}$.}
\label{figure_K}
\vspace{-0.7cm}
\end{figure}

\section{Conclusions}
We studied a multi-pair TW DF relay system using a massive MIMO scheme at the relay, upon adopting a MR receiver.
Furthermore we derived the exact expressions and approximations of the SE over Rician fading channels for an imperfect CSI scenario.
Finally, we quantified the trade-off between the SE, ${p_u}$ and ${p_r}$.
Additionally, the sum SE of the imperfect CSI scenario increases, as the Rician $K$-factor grows.

\begin{appendices}
\vspace{-0.5cm}

\section{Proof of Theorem \ref{theorem2} }
\setcounter{equation}{38}

From \eqref{R_{1,i}^{IP}}, by using Lemma 1 of \cite{zhang2014power}, $R_{1,i}^{}$ in \eqref{R_{1,i}^{IP}} can be approximated as
\begin{equation}\label{RIP}
R_{1,i}^{} \!\!\approx \lambda  {\log _2}\left(\! {1\!\! +\!\! \frac{{\mathbb{E}\left\{ {A_i^{}} \right\} + \mathbb{E}\left\{ {B_i^{}} \right\}}}{{\mathbb{E}\left\{ {C_i^{}} \right\} + \mathbb{E}\left\{ {D_i^{}} \right\} + \mathbb{E}\left\{ {E_i^{}} \right\}}}} \!\!\right)\buildrel \Delta \over = \tilde R_{1,i}^{}.
\end{equation}
Then, we will calculate the terms $\mathbb{E}\left\{ {A_i^{}} \right\}$, $\mathbb{E}\left\{ {B_i^{}} \right\}$, $\mathbb{E}\left\{ {C_i^{}} \right\}$, $\mathbb{E}\left\{ {D_i^{}} \right\}$ and $\mathbb{E}\left\{ {E_i^{}} \right\}$ .

By using  Lemma 5 of \cite{zhang2014power} and retaining the dominant components, we can approximate $\mathbb{E}\left\{ { A_i^{}} \right\}$, $\mathbb{E}\left\{ { B_i^{}} \right\}$ and $\mathbb{E}\left\{ { C_i^{}} \right\}$ respectively as
\vspace{-0.2cm}
\begin{equation}\label{EAj}
\vspace{-0.1cm}
\mathbb{E}\left\{ { A_i^{}} \right\}  \approx M^2 \omega_{AR,i}^2, \;\;\;\; \mathbb{E}\left\{ { B_i^{}} \right\}  \approx M^2 \omega_{BR,i}^2.
\end{equation}
\begin{equation}\label{ECj}
\vspace{-0.2cm}
\mathbb{E}\left\{ { C_i^{}} \right\}
\!=\!\! M\left( {\omega_{AR,i} \!+\! \omega_{BR,i} } \right)\left( {{p_{A,i}} \sigma _{AR,i}^2 \!+\! {p_{B,i}}\sigma _{BR,i}^2} \right). \!
\end{equation}

From \eqref{hat D_i^{IP}}, $\mathbb{E}\left\{ { D_i^{}} \right\}$ can be written as
\vspace{-0.1cm}
\begin{align}\label{EDjIP}
\vspace{-0.2cm}
\mathbb{E}\left\{ { D_i^{}} \right\} \!\!
 &= \!\!\sum\limits_{j \ne i}^{}\! {{p_{A,j}}\!\Big(\! {\mathbb{E}\Big(\! {{{\big| {\mathbf{\hat h}_{AR,i}^H\mathbf{h}_{AR,j}^{}} \big|}^2}} \!\Big) \!+ \!\mathbb{E}\Big(\! {{{\big| {\mathbf{\hat h}_{BR,i}^H\mathbf{h}_{AR,j}^{}} \big|}^2}} \!\Big)}\! \Big)} \nonumber \\
& \!\!\!\!\!\!+\!\! \sum\limits_{j \ne i}^{} {{p_{B,j}}\!\Big( \!{\mathbb{E}\Big(\! {{{\big| {\mathbf{\hat h}_{AR,i}^H\mathbf{h}_{BR,j}^{}} \big|}^2}} \!\Big) \!\!+\!\! \mathbb{E}\Big(\! {{{\big| {\mathbf{\hat h}_{BR,i}^H\mathbf{h}_{BR,j}^{}} \big|}^2}} \!\Big)} \!\Big)}.  \!\!
\vspace{-0.1cm}
\end{align}

The term $\mathbb{E}\big\{ {{{\big| {\mathbf{\hat h}_{AR,i}^H\mathbf{h}_{AR,j}^{}} \big|}^2}} \big\}$ can be expanded as
\begin{align}\label{ED1}
&\!\!\!\!\mathbb{E}\big\{ {{{\big| {\mathbf{\hat h}_{AR,i}^H\mathbf{h}_{AR,j}^{}} \big|}^2}} \big\}
\!\!= \!\!\mathbb{E}\big\{\! {{{\big| {\mathbf{\hat h}_{AR,i}^H\mathbf{\hat h}_{AR,j}^{}} \big|}^2}} \!\big\}\!\! + \!\! \mathbb{E}\big\{\! {{{\big| {\mathbf{\hat h}_{AR,i}^H\mathbf{e}_{AR,j}^{}} \big|}^2}} \!\big\} \nonumber \\
&\!\!\!\!\!\!\! + \mathbb{E}\!\!\left\{\! {\mathbf{\hat h}_{AR,i}^{}\mathbf{\hat h}_{AR,j}^H\mathbf{\hat h}_{AR,i}^H\mathbf{e}_{AR,j}^{}} \!\!\right\}\!\! + \!\!\mathbb{E}\!\!\left\{\! {\mathbf{\hat h}_{AR,i}^H\mathbf{\hat h}_{AR,j}^{}\mathbf{\hat h}_{AR,i}^{}\mathbf{e}_{AR,j}^H} \!\!\right\}\!.\!\!\!
\end{align}

According to Lemma 5 of \cite{zhang2014power}, an approximation of $\mathbb{E}\big\{ {{{\big| {\mathbf{\hat h}_{AR,i}^H\mathbf{h}_{AR,j}^{}} \big|}^2}} \big\}$ can be obtained as
\begin{align}\label{ED11}
&\mathbb{E}\left\{ {{{\left| {\mathbf{\hat h}_{AR,i}^H\mathbf{h}_{AR,j}^{}} \right|}^2}} \right\}
 \approx M\xi_{AR,ij}.
\end{align}

Similarly,  we can calculate the approximate expressions  of the remaining three terms $\mathbb{E}\big\{ {{{\big| {\mathbf{\hat h}_{BR,i}^H\mathbf{h}_{AR,j}^{}} \big|}^2}} \big\}$, $\mathbb{E}\big\{ {{{\big| {\mathbf{\hat h}_{AR,i}^H\mathbf{h}_{BR,j}^{}} \big|}^2}} \big\}$ and  $\mathbb{E}\big\{ {{{\big| {\mathbf{\hat h}_{BR,i}^H\mathbf{h}_{BR,j}^{}} \big|}^2}} \big\}$. Then, the approximate expression  of  $\mathbb{E}\left\{ { D_i^{}} \right\}$ can be obtained.

From \eqref{hat E_i^{IP}}, by using Lemma 5 of \cite{zhang2014power}, we can calculate $\mathbb{E}\left\{ {E_i^{}} \right\}$ as
\begin{align}\label{EEj}
\mathbb{E}\left\{ { E_i^{}} \right\}
&= \mathbb{E}\big\{ {{{\big\| {\mathbf{\hat h}_{AR,i}^{}} \big\|}^2}} \big\} + \mathbb{E}\big\{ {{{\big\| {\mathbf{\hat h}_{BR,i}^{}} \big\|}^2}} \big\} \nonumber \\
&=M(\omega_{AR,i}+\omega_{BR,i}).
\end{align}

By substituting the above results into \eqref{R_{1,i}^{IP}}  and \eqref{hat R_{XR,i}^{DF}}, we can respectively approximate ${R_{1,i}}$ and $R_{AR,i}$  as $\tilde R_{1,i}^{}$ in \eqref{tilde hat R_{1,i}^{DF}} and $\tilde R_{AR,i}^{}$  in  \eqref{tilde hat R_{AR,i}^{DF}} with $X=A$. Then, we use a similar method to obtain the approximation of $R_{BR,i}$ as $ \tilde R_{BR,i}^{}$ in \eqref{tilde hat R_{AR,i}^{DF}} with $X=B$. Thus, $R_{XR,i}$ in \eqref{hat R_{XR,i}^{DF}} can be approximated as $\tilde R_{XR,i}$ in \eqref{tilde hat R_{AR,i}^{DF}}.


Moreover, to calculate $R_{RX,i}$ in \eqref{R_{RX,i}^{IP}}, we will first caculate $R_{RA,i}$. The term $\mathbb{E}\big\{ {\mathbf{h}_{AR,i}^T\mathbf{\hat h}_{AR,i}^*} \big\}$ is given by
\begin{align}\label{Eg_{AR,i}^T}
&\mathbb{E}\big\{ {\mathbf{h}_{AR,i}^T\mathbf{\hat h}_{AR,i}^*} \big\} = M\omega_{AR,i}.
\end{align}

Then, we derive the term $Var\big\{ {\mathbf{h}_{AR,i}^T\mathbf{\hat h}_{AR,i}^*} \big\}$ as
\begin{align}\label{var1}
&\!\!Var\!\big\{\! {\mathbf{h}_{AR,i}^T\mathbf{\hat h}_{AR,i}^*} \!\big\} \!\nonumber \\
&= \frac{{M\beta _{AR,i}^2} \left[ {2{K_{AR,i}}{\eta _{AR,i}} + \eta _{AR,i}^2 + \frac{{{K_{AR,i}} + {\eta _{AR,i}}}}{{1 + {\tau _p}{p_p}\beta _{AR,i}^{}}}} \right]}{{{{\left( {{K_{AR,i}} + 1} \right)}^2}}}.
\end{align}

Similar to \eqref{var1}, the term $Var\big\{ {\mathbf{h}_{AR,i}^T\mathbf{\hat h}_{RB,i}^*} \big\}$ can be expressed as
\begin{align}\label{var2}
&\!\!Var\big\{ {\mathbf{h}_{AR,i}^T\mathbf{\hat h}_{BR,i}^*} \big\}
\!\approx M \xi_{BR,ii}.
\end{align}

Then, we derive the term $\sum\limits_{j \ne i}^{} {\big( {\mathbb{E}\big\{ {{{\big| {\mathbf{h}_{AR,i}^T\mathbf{\hat h}_{BR,j}^*} \big|}^2}} \big\} + \mathbb{E}\big\{ {{{\big| {\mathbf{h}_{AR,i}^T\mathbf{\hat h}_{AR,j}^*} \big|}^2}} \big\}} \big)}$ similarly. For $j \ne i$, we can obtain
\begin{equation}\label{Ej4}
\mathbb{E}\big\{ {{{\big| {\mathbf{h}_{AR,i}^T\mathbf{\hat h}_{BR,j}^*} \big|}^2}} \big\}
\approx M \xi_{BR,ij},
\end{equation}
\begin{align}\label{Ej5}
&\mathbb{E}\big\{ {{{\big| {\mathbf{h}_{AR,i}^T\mathbf{\hat h}_{AR,j}^*} \big|}^2}} \big\}
\approx M \xi_{AR,ij}.
\end{align}
\vspace{-0.2cm}

Finally, the term $\rho _{}^{}$ can be expressed as
\begin{align}\label{rhoj}
\rho =\sqrt {{{{p_r}}}/\Big({{M\!\sum\limits_{j = 1}^N \!{\left(\! {\omega_{AR,j} \!+\! \omega_{BR,j}} \! \right)} }}\! \Big) }.
\end{align}

Substituting \eqref{Eg_{AR,i}^T}-\eqref{rhoj} into \eqref{R_{RX,i}^{IP}} and \eqref{gamma{RX,i}^{DF}}, we can obtain  $\tilde R_{RA,i}^{}$ as \eqref{tilde hat R_{RA,i}^{DF}} with $X=A$.  Then, $R_{RB,i}$ can be approximated  by  \eqref{tilde hat R_{RA,i}^{DF}} with $X=B$ by using a similar method.
Thus, $R_{RX,i}$ in \eqref{R_{RX,i}^{IP}} can be approximated as $\tilde R_{RX,i}$ in \eqref{tilde hat R_{RA,i}^{DF}}.
Then, by substituting \eqref{tilde hat R_{AR,i}^{DF}} and \eqref{tilde hat R_{RA,i}^{DF}} into
\eqref{tilde hat R_{2,i}^{DF}}, we can obtain the expression of $\tilde R_{2,i}$.

Given the expressions of  $\tilde R_{1,i}$ and $\tilde R_{2,i}$, we complete the proof.

\end{appendices}

\bibliographystyle{IEEEtran}
\bibliography{IEEEabrv,DF}

\begin{thebibliography}{10}
\providecommand{\url}[1]{#1}
\csname url@samestyle\endcsname
\providecommand{\newblock}{\relax}
\providecommand{\bibinfo}[2]{#2}
\providecommand{\BIBentrySTDinterwordspacing}{\spaceskip=0pt\relax}
\providecommand{\BIBentryALTinterwordstretchfactor}{4}
\providecommand{\BIBentryALTinterwordspacing}{\spaceskip=\fontdimen2\font plus
\BIBentryALTinterwordstretchfactor\fontdimen3\font minus
  \fontdimen4\font\relax}
\providecommand{\BIBforeignlanguage}[2]{{%
\expandafter\ifx\csname l@#1\endcsname\relax
\typeout{** WARNING: IEEEtran.bst: No hyphenation pattern has been}%
\typeout{** loaded for the language `#1'. Using the pattern for}%
\typeout{** the default language instead.}%
\else
\language=\csname l@#1\endcsname
\fi
#2}}
\providecommand{\BIBdecl}{\relax}
\BIBdecl

\bibitem{Marzetta2010Noncooperative}
T.~L. Marzetta, ``Noncooperative cellular wireless with unlimited numbers of
  base station antennas,'' \emph{IEEE Trans. Wireless Commun.}, vol.~9, no.~11,
  pp. 3590--3600, Nov. 2010.

\bibitem{jin2014ergodic}
S.~Jin, X.~Liang, K.-K. Wong, X.~Gao, and Q.~Zhu, ``Ergodic rate analysis for
  multipair massive {MIMO} two-way relay networks,'' \emph{IEEE Trans. Wireless
  Commun.}, vol.~14, no.~3, pp. 1480--1491, Mar. 2015.

\bibitem{peng2013achievable}
Z.~Peng, W.~Xu, L.~Wang, and C.~Zhao, ``Achievable rate analysis and feedback
  design for multiuser {MIMO} relay with imperfect {CSI},'' \emph{IEEE Trans.
  Wireless Commun.}, vol.~13, no.~2, pp. 780--793, Feb. 2013.

\bibitem{suraweera2013multi}
H.~A. Suraweera, H.~Q. Ngo, T.~Q. Duong, C.~Yuen, and E.~G. Larsson,
  ``Multi-pair amplify-and-forward relaying with very large antenna arrays,''
  in \emph{Proc. IEEE Int. Conf. Commun. (ICC)}, Jun. 2013, pp. 4635--4640.

\bibitem{feng2017power}
J.~Feng, S.~Ma, G.~Yang, and B.~Xia, ``Power scaling of full-duplex two-way
  massive {MIMO} relay systems with correlated antennas and {MRC/MRT}
  processing,'' \emph{IEEE Trans. Wireless Commun.}, vol.~16, no.~7, pp.
  4738--4753, May 2017.

\bibitem{gao2013sum}
J.~Gao, S.~A. Vorobyov, H.~Jiang, J.~Zhang, and M.~Haardt, ``Sum-rate
  maximization with minimum power consumption for {MIMO} {DF} two-way relaying
  {Part I}: {Relay} optimization,'' \emph{IEEE Trans. Signal Process.},
  vol.~61, no.~14, pp. 3563--3577, Jul. 2013.

\bibitem{kong2018multipair}
C.~Kong, C.~Zhong, M.~Matthaiou, E.~Bj{\"o}rnson, and Z.~Zhang, ``Multipair
  two-way half-duplex {DF} relaying with massive arrays and imperfect {CSI},''
  \emph{IEEE Trans. Wireless Commun.}, vol.~17, no.~5, pp. 3269--3283, May
  2018.

\bibitem{sayeed2011continuous}
A.~M. Sayeed and N.~Behdad, ``Continuous aperture phased {MIMO}: A new
  architecture for optimum line-of-sight links,'' in \emph{Proc. IEEE Int.
  Symp. Ant. Propag. (APS)}, Jul. 2011, pp. 293--296.

\bibitem{zhang2014power}
Q.~Zhang, S.~Jin, K.-K. Wong, H.~Zhu, and M.~Matthaiou, ``Power scaling of
  uplink massive {MIMO} systems with arbitrary-rank channel means,'' \emph{IEEE
  J. Sel. Topics Signal Process.}, vol.~8, no.~5, pp. 966--981, Oct. 2014.

\bibitem{2018Multi}
X.~Li, M.~Matthaiou, Y.~Liu, H.~Q. Ngo, and L.~Li, ``Multi-pair two-way massive
  {MIMO} relaying with hardware impairments over rician fading channels,'' Dec.
  2018.

\bibitem{2019Power}
D.~Gu, J.~Yang, X.~Lei, and R.~Gao, ``Power scaling for multi-pair massive
  {MIMO} two-way relaying system under rician fading,'' \emph{Springer
  Telecommun. Syst.}, vol.~72, pp. 401--412, 2019.

\end{thebibliography}

\end{document}